\renewcommand{\phi}{\varphi}
\def\1{{\mathchoice {\rm 1\mskip-4mu l} {\rm 1\mskip-4mu l}
{\rm 1\mskip-4.5mu l} {\rm 1\mskip-5mu l}}}
\newtheorem{theorem}{{\small T}{\scriptsize HEOREM}}[section]
\newtheorem{corollary}{{\bf{\small C}{\scriptsize OROLLARY}}}[section]
\newtheorem{proposition}{{\bf{\small P}{\scriptsize ROPOSITION}}}[section]
\newtheorem{lemma}{{\bf{\small L}{\scriptsize EMMA}}}[section]
\newtheorem{remark}{{\bf{\small R}{\scriptsize EMARK}}}[section]
\newtheorem{definition}{{\bf{\small D}{\scriptsize EFINITION}}}[section]
\renewenvironment{proof}[1]
{\noindent{{\bf{\small{ P}{\scriptsize ROOF}}}.}\hspace{0.1cm} #1} {$\;\qed$\newline}
\newcommand{\beq}{\begin{eqnarray}}
\newcommand{\eeq}{\end{eqnarray}}
\newcommand{\ba}{\begin{align*}}
\newcommand{\ea}{\end{align*}}
\newcommand{\be}{\begin{equation}}
\newcommand{\ee}{\end{equation}}
\newcommand{\bl}{\begin{lemma}}
\newcommand{\el}{\end{lemma}}
\newcommand{\br}{\begin{remark}}
\newcommand{\er}{\end{remark}}
\newcommand{\bt}{\begin{theorem}}
\newcommand{\et}{\end{theorem}}
\newcommand{\bd}{\begin{definition}}
\newcommand{\ed}{\end{definition}}
\newcommand{\bp}{\begin{proposition}}
\newcommand{\ep}{\end{proposition}}
\newcommand{\bc}{\begin{corollary}}
\newcommand{\ec}{\end{corollary}}
\newcommand{\bpr}{\begin{proof}}
\newcommand{\epr}{\end{proof}}
\newcommand{\bi}{\begin{itemize}}
\newcommand{\ei}{\end{itemize}}
\newcommand{\ben}{\begin{enumerate}}
\newcommand{\een}{\end{enumerate}}
\title{Multilinearity of two-point correlation functions in one-dimensional models out of equilibrium}
\author{
Frank Redig$^{\textup{{\tiny(a)}}}$,\\
Wioletta Ruszel$^{\textup{{\tiny(a)}}}$,
\\
{\small $^{\textup{(a)}}$
Delft University of Technology}, {\small Mekelweg 4 2628 CD Delft , The Netherlands}
\\
}
\begin{document}
\maketitle

\section{Introduction}
One of the main issues of models in equilibrium statistical mechanics is to explain phase transitions and critical phenomena. In some sense the macroscopic behaviour does not depend much on the microscopic one which leads to universal
behaviour for different systems at large scales.

In non-equilibrium systems a greater variety of phenomena occurs and in some sense is more sensitive to microscopic information. A unified thermodynamical theory is much more difficult to obtain and different attempts were made.

The simplest situation where one can create a so-called non-equilibrium steady states (NESS) is to consider a system
coupled at the boundaries in with different reservoirs creating currents. To construct such models, one usually starts from
a simple bulk model to which boundary terms are added. By simple bulk model here we mean a reversible Markovian conservative
dynamics (such as the simple symmetric exclusion process) having simple
stationary measures e.g. of product nature. Upon coupling such a system to different reservoirs, the nature of the stationary measures
changes dramatically, i.e., NESS thus created are far from product, and generically show long-range correlations.
In some special systems, these correlations are accessible analytically \cite{spohn}. Such models could be considered as the non-equilibrium analogue of
exactly solvable models in equilibrium (such as the Ising model). On the macro scale in a large class of one-dimensional
conservative Markov dynamics, it is believed \cite{Gab} that two-point correlation functions in NESS
are multilinear. On the micro scale this multi-linearity is rare and a possible indication of exact solvability e.g. in the sense of matrix ansatz solution
\cite{Der}. 

In \cite{Dual}  a class of wealth distribution models with two agents in equilibrium is considered. The authors showed under which conditions on the redistribution measure and transition operator there exist stationary product measures and duality functions. At each time step two agents are redistributing a random amount $\epsilon$ of their wealth and a constant part $\lambda \in [0,1]$ is retained.
It turns out that there exist only stationary product measures for the trivial case $\lambda=0$ and the law of $\epsilon$ is independent of the amount of total wealth and in particular is $Beta(k,k)$ distributed.

In \cite{Gab} the authors study particle systems out of equilibrium. Multilinearity of the two-point functions is obtained for the KMP (Kipnis-Marchioro-Presutti) model of heat conduction and boundary driven SEP (symmetric exclusion process) for  trivial $\lambda$, $\epsilon$ uniform and the reservoirs follow an exponential distribution. The two-point correlation functions turn out to be negatively correlated for the SEP resp. positively correlated for KMP while the off-diagonal terms are essentially the same in the large $N$ limit.

An example where one does not obtain multilinearity for a model of heat conduction can be found in \cite{kia}.
The authors study the Brownian momentum process which is weakly coupled
to heat baths and in particular the NESS  and its proximity to the local equilibrium measure
in terms of the strength of coupling. For three- and four-site systems, they obtain
the two-point correlation function and show it is generically not multilinear.

Here we are interested in the question under which conditions we have multilinear two-point functions in this class of models.
The conditions will only include a relation between the moments of the reservoir laws, moments of the redistribution parameter and the constant $\lambda$.
The result presents a first example where one can obtain multilinear two-point functions even in the absence of product stationary measures.

The rest of our paper is organized as follows.
In section \ref{sec:def} we will provide all necessary definitions. Section \ref{sec:res} deals with the main result and finally we discuss in section 4 some generalizations. Detailed computations can furthermore be found in the appendix.

\section{Notation and Definitions}\label{sec:def}

We consider one-dimensional interacting models on the set $\{1,...,N \}$ coupled to some reservoirs. The sites $1,...,N$ can be interpreted as \textit{particles} or \textit{agents}. The boundary sites are interacting with reservoirs which are represented by the \textit{ghost sites} $0$ and $N+1$. Let $\Omega=S^N=[0,\infty)^N$ denote the state space of the model. Each element $(x_1,...,x_N)\in \Omega$ can be seen as e.g. the \textit{wealth} or \textit{energy} at vertex $1,...,N$, see \cite{Dual,KMP}.

For some random $\epsilon \in [0,1]$ and constant $\lambda \in (0,1)$ we can define a map $T^{k,l}_{\lambda,\epsilon} : \Omega \rightarrow \Omega$ for some $k,l \in \{ 1,..., N\}$ with $k<l$ by
\begin{equation}\label{defT}
\begin{split}
& T^{k,l}_{\lambda,\epsilon}(x_1,...,x_N) = \\
&(x_1,...,x_{k-1},\lambda x_k + \epsilon(1-\lambda)(x_k+x_l),...,\lambda x_l +  (1- \epsilon)(1-\lambda)(x_k+x_l),...,x_N)
\end{split}
\end{equation}
At each time step some random amount of \textit{energy} or \textit{wealth} is exchanged.
During the exchange some fixed amount $\lambda$  is kept while some random amount (modeled by $\epsilon$) is exchanged.

We remark that the map conserves the total wealth (energy) $s=x_1+...+x_N$.

Let $\nu$ be the law of  $\epsilon \in [0,1]$ which will be called the redistribution parameter. We further assume (A1) that $\nu$ is symmetric and its first moment is equal to $\frac{1}{2}$ and that it has a second moment (A2) which will be abbreviated by $\int_0^1 \epsilon(1-\epsilon) \nu(d\epsilon) =:\alpha$.  Note that it follows trivially that by (A1), $\alpha \leq \frac{1}{4}$.

 After some exponential waiting time with mean 1, the initial state $(x_1,...,x_N)\in \Omega$ is is redistributed and replaced by $ T^{k,l}_{\lambda,\epsilon}(x_1,...,x_N) $.

The generator of the model can be expressed as the sum of the bulk generator describing the dynamics in the bulk and boundary generators which describe the interaction with the reservoirs,
\begin{equation}\label{defL}
\mathcal{L} = \mathcal{L}_L + \mathcal{L}_b + \mathcal{L}_R.
\end{equation}
It is a Markov process.
Let $f$ be some  bounded continuous function, $f:\Omega \rightarrow \mathbb{R}$
then
\begin{equation}
\begin{split}
\mathcal{L}_b(&f(x_1,...,x_N)) = \\
&\sum_{k,l\in \{2,...,N-1\}} p(k,l) \biggl[\int_0^1 f(T^{k,l}_{\lambda,\epsilon}(x_1,...,x_N)) \nu(d\epsilon) -f(x_1,...,x_N)\biggr ]
\end{split}
\end{equation}
where $p(k,l)$ is the transition probability of a symmetric nearest neighbour random walk given by
\[
p(k,l) = \frac{1}{2}\delta_{k,k-1} + \frac{1}{2}\delta_{k,k+1}
\]
for $k,l\in \{2,...,N-1\}$.
The generators of the reservoirs are defined as
\begin{equation}\label{defLeft}
\begin{split}
\mathcal{L}_L( &f(x_1,...,x_N)) =  \\
&  \int_0^{\infty}\int_0^1 f(\lambda x_1 + \epsilon(1-\lambda)(x_0 + x_1),x_2,..,x_N) \nu(d\epsilon) \mu_L(dx_0) -f(x_1....,x_N)
\end{split}
\end{equation}
resp.
\begin{equation}
\begin{split}\label{LRight}
&\mathcal{L}_R(f(x_1,...,x_N)) =  \\
 \int_0^{\infty} \int_0^1 & f(x_1,...,x_{N-1},\lambda x_N + \epsilon(1-\lambda)(x_N + x_{N+1})) \nu(d\epsilon) \mu_R(dx_{N+1}) \\
& -f(x_1,...,x_N)
\end{split}
\end{equation}
$\mu_0$ resp. $\mu_{N+1}$ denote the distributions of the reservoirs at the ghost sites $0$ resp. $N+1$.
We abbreviate their first moments  by
\begin{equation}
T_L := \int_0^{\infty} x_0 \mu_L(dx_0) \text{ ,\space}  T_R := \int_0^{\infty} x_{N+1} \mu_{R}(dx_{N+1})
\end{equation}
resp. the second moments by
\begin{equation}\label{secmom}
L^2 := \int_0^{\infty} x^2_0 \mu_L(dx_0) \text{ ,\space}  R^2 := \int_0^{\infty} x^2_{N+1} \mu_{R}(dx_{N+1})
\end{equation}
We remark that at this point we only assume that the first and second moments of
the reservoir measures exist.

We call a probability measure $\mu \in \mathcal{P}(\Omega)$ on the set of probability measures on $\Omega$ \textit{stationary} if and only if for $f$ bounded and continuous on $\Omega$:
\begin{equation}\label{defSt}
\int_{\Omega} \mathcal{L}(f(x_1,...,x_N))\mu(dx_1,...,dx_N) = 0.
 \end{equation}
For $i<j$ and $i,j \in \{ 0,...,N+1 \}$ the two-point function $\mu(x_i,x_j)$ is equal to $\mu(x_i,x_j):=\int_{\Omega}x_i  x_j \mu(dx_1,...,dx_N)$. Later we will use for convenience the notation $\mu_{ij}$. Further let
\begin{equation}
C_N(i,j) :=\mu(x_i,x_j)-E_N(i)E_N(j)
\end{equation}
denote the \textit{two-point correlation function} of $\mu$. We set
\begin{equation}
C_N(0,0):=\int_0^{\infty} (x_0-T_L)^2\mu_0(dx_0)
\end{equation}
resp. for
\begin{equation}
C_N(N+1,N+1):=\int_0^{\infty} (x_{N+1}-T_R)^2\mu_{N+1}(dx_{N+1}).
\end{equation}
Note that this is equivalent to saying $L^2=:\mu(x_0,x_0)$ resp.  $R^2=:\mu(x_{N+1},x_{N+1})$. For $0\leq i \leq N$ and $1\leq j\leq N+1$ we set $C_N(0,j):=C_N(i,N+1):=0$.

We say that the two-point function satisfies the \textit{multilinearity ansatz} if and only if we can find some coefficients $a,b,c,d,e,f,g$ such that
 for $i,j \in \{ 0,...,N+1\}$ and $i<j$
\begin{equation}\label{multi}
\begin{cases}
 \mu_{i,j} = a + b i + c j + d ij  & \text{ if } i< j, i=1,...,N-1, j= 2, ..., N\\
 \mu_{i,i}= e + f i + g i^2 & \text{ if } i=j, i=1,...,N.
\end{cases}
\end{equation}

\section{Result}\label{sec:res}
We present our main theorem.
\begin{theorem}
The two-point functions $(\mu_{i,j})_{i,j}$ are multilinear with coefficients given by
\begin{equation}
\begin{split}
& a:=T^2_L \\
& b := \frac{\left(T_R-T_L\right) \left[(N+1) (\lambda +2\alpha (1- \lambda)) T_L+(1-\lambda) (1-4\alpha)T_R\right]}{(N+1)(1+N \lambda+2(N-1) (1-\lambda)\alpha)}\\
& c:= \frac{(T_R-T_L)T_L}{N+1} \\
& d:=\frac{(\lambda + 2 \alpha (1-\lambda)) \left(T_L-T_R\right){}^2}{(N+1) (1+N \lambda + 2(N-1)  (1-\lambda)\alpha)}\\
& f:= \frac{(1-2 \alpha (1-\lambda)) \left(T_R-T_L\right) \left[(1+ (2N+1) \lambda +4 N \alpha (1-\lambda) ) T_L+(1 - 4 \alpha) (1-\lambda) T_R\right]}{(N+1) (1 + N \lambda + 2(N-1)  (1-\lambda) \alpha ) (\lambda + 2 \alpha (1-\lambda))} \\
& g:= \frac{(1-2 \alpha (1-\lambda)) \left(T_L-T_R\right){}^2}{(N+1) (1+N \lambda + 2(N-1) (1-\lambda)\alpha)}.
\end{split}
\end{equation}
if and only if  $L^2 = L^2(\alpha,\lambda) $ and $R^2=R^2(\alpha,\lambda)$ are chosen in the following way
\begin{equation}\label{LR}
\begin{split}
 L^2  &:=\frac{(1-2\alpha(1-\lambda)) }{(\lambda+2\alpha(1-\lambda))} T^2_L\\
& + \frac{\alpha(1-\lambda)(1-2\alpha(1-\lambda))(T_L-T_R)^2}{(N+1)[1+N\lambda+2(N-1)(1-\lambda) \alpha](\lambda+2\alpha(1-\lambda))} \\
 R^2 &:=\frac{(1-2\alpha(1-\lambda)) }{(\lambda+2\alpha(1-\lambda))} T^2_R\\
&5 + \frac{\alpha(1-\lambda)(1-2\alpha(1-\lambda))(T_L-T_R)^2}{(N+1)[1+N\lambda+2(N-1)(1-\lambda) \alpha](\lambda+2\alpha(1-\lambda))} \end{split}
\end{equation}

In particular the two-point correlation functions are equal to
\begin{equation}
C_N(i,j)=
\begin{cases}
&\biggl ( \frac{(1-4 \alpha) (1-\lambda) \left(T_L-T_R\right){}^2}{ (1+\lambda N +2  (N-1) (1-\lambda) \alpha )}\biggr ) \frac{i}{N+1}\biggl (1 - \frac{j}{N+1} \biggr ) : 0\leq i<j\leq N+1 \\
& \frac{(1- 4 \alpha) (1-\lambda) T_L^2}{\lambda + 2 \alpha (1-\lambda)} + \frac{\alpha (1-2 \alpha (1-\lambda)) (1-\lambda) \left(T_L-T_R\right){}^2}{(\lambda + 2 \alpha (1-\lambda))  [1+\lambda N + 2 (N-1)  (1-\lambda) \alpha ](N+1)}  \\
&+ \biggl ( \frac{(1-4 \alpha) (1-\lambda) \left(T_R-T_L\right) \left((1+2 \lambda N+2 \alpha (2 N-1) (1-\lambda) ) T_L + (1-2 \alpha (1-\lambda)) T_R\right)}{( \lambda + 2 \alpha (1-\lambda)  ) (1+ \lambda N+ 2  (N-1) (1-\lambda) \alpha)}\biggr ) \frac{i}{N+1} \\
&+ \biggl ( \frac{(1-4 \alpha) (1-\lambda)  \left(T_L-T_R\right){}^2 N}{ (1+ \lambda N + 2  (N-1) (1-\lambda)  \alpha )} \biggr ) \frac{i^2}{(N+1)^2}: 0\leq i=j \leq N+1\\
\end{cases}
\end{equation}
\end{theorem}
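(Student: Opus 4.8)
The plan is to turn the stationarity condition \eqref{defSt} into a closed linear system for the first moments $E_N(i)=\mu(x_i)$ and the two-point functions $\mu_{i,j}$, by testing against the coordinate functions $f(x)=x_i$ and the products $f(x)=x_ix_j$, and then to check that the proposed expressions solve it exactly when $L^2,R^2$ are as stated. First I would compute $\mathcal{L}x_i$. By assumption (A1), $\int\epsilon\,\nu(d\epsilon)=\tfrac12$, so each nearest-neighbour move sends $x_i$ in mean to $\tfrac{1+\lambda}{2}x_i+\tfrac{1-\lambda}{2}x_{i\pm1}$, whence in the bulk $\mathcal{L}x_i=\tfrac{1-\lambda}{4}(x_{i+1}+x_{i-1}-2x_i)$, a discrete Laplacian, while $\mathcal{L}_L$ and $\mathcal{L}_R$ give the analogous boundary terms with $x_0,x_{N+1}$ replaced in mean by $T_L,T_R$. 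Imposing \eqref{defSt} forces $E_N(\cdot)$ to be discrete harmonic with these boundary data, giving the linear profile $E_N(i)=T_L+(T_R-T_L)\tfrac{i}{N+1}$; this already reproduces the coefficients $a=T_L^2$ and $c=\tfrac{(T_R-T_L)T_L}{N+1}$ of the product part $E_N(i)E_N(j)$.

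The core is the second-moment system obtained from $f(x)=x_ix_j$, where I would distinguish three regimes. When $|i-j|\ge2$ no jump touches both indices, so $\mathcal{L}$ acts as a discrete Laplacian in each variable and the off-diagonal $\mu_{i,j}$ solve a source-free equation satisfied by any bilinear form. For the adjacent pair $j=i+1$ the single move $T^{i,i+1}_{\lambda,\epsilon}$ acts on both coordinates at once; expanding the product and averaging with (A2), i.e. $\int\epsilon(1-\epsilon)\,\nu(d\epsilon)=\alpha$ and hence $\int\epsilon^2\,\nu(d\epsilon)=\tfrac12-\alpha$, shows that $\mu_{i,i+1}$ couples to the diagonal terms $\mu_{i,i},\mu_{i+1,i+1}$. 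Conversely the diagonal equation from $f(x)=x_i^2$ couples $\mu_{i,i}$ back to $\mu_{i-1,i}$ and $\mu_{i,i+1}$. This diagonal/off-diagonal coupling through $\alpha$ and $\lambda$ is precisely what splits the ansatz \eqref{multi} into its two cases, and is the step I expect to be the main obstacle, since it is where the combinations $\lambda+2\alpha(1-\lambda)$ and $1-2\alpha(1-\lambda)$ appearing in the coefficients are generated.

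Finally I would substitute \eqref{multi} into this system. The harmonic equations for $|i-j|\ge2$ hold for any bilinear form and only constrain $b,c,d$; the bulk adjacent-pair and diagonal equations then tie $f,g$ to $d$ and yield the stated formulas. The still-undetermined constants, together with the boundary equations at $i=1$, $j=N$ and at the diagonal endpoints $i=1,N$ (which involve $L^2,R^2$ through $\mathcal{L}_L,\mathcal{L}_R$), form a small linear system; its solvability is the content of the ``if and only if'', namely that the diagonal boundary relations are compatible with a multilinear $\mu_{i,j}$ exactly when $L^2,R^2$ take the values \eqref{LR}. Once every coefficient is fixed, the correlations follow by direct subtraction $C_N(i,j)=\mu_{i,j}-E_N(i)E_N(j)$, and the off-diagonal part collapses to $\tfrac{i}{N+1}\bigl(1-\tfrac{j}{N+1}\bigr)$ up to the stated prefactor because, as noted above, the $a$- and $c$-coefficients of $\mu_{i,j}$ already coincide with those of $E_N(i)E_N(j)$.
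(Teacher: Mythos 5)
Your proposal follows essentially the same route as the paper: derive the stationarity equations for $E_N(i)$ and $\mu_{i,j}$ from the generator in the three regimes $|i-j|\ge 2$, $|i-j|=1$, $i=j$, note the linear profile and the $\alpha,\lambda$-coupling between diagonal and adjacent terms, substitute the multilinear ansatz, and read off $L^2,R^2$ from the diagonal boundary equations. The only slip is the harmless prefactor $\tfrac{1-\lambda}{4}$ instead of $\tfrac{1-\lambda}{2}$ in $\mathcal{L}x_i$, which is irrelevant once the expression is set to zero.
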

\begin{proof}

We are looking for expressions for $(\mu_{i,j})_{i,j}$ for $i,j \in \{ 0,...,N+1\}$.
In the bulk for $i\leq j$,  $\mu_{i,j}$ have to satisfy the following set of equations (see appendix):
\begin{equation}
\begin{cases}\label{eqbulk}
-2(1-A) \mu_{ii} + A( \mu_{i-1,i-1} +  \mu_{i+1,i+1}) + B (\mu_{i,i+1} + \mu_{i-1,i}) =  0 & \text{ if } i=2,...,N-1\\
\mu_{i+1,j} + \mu_{i-1,j} + \mu_{i,j+1} + \mu_{i,j-1}- 4 \mu_{i,j}  =  0 & \text{  if } {i=2,..,N-3 \atop j=4,...,N-1}\\
C(\mu_{i,i} +\mu_{i+1,i+1}) +\frac{1}{2}(\mu_{i-1,i+1} + \mu_{i,i+2}) -(1+B)\mu_{i,i+1}  =  0 &\text{ if } i=2,..,N-2
\end{cases}
\end{equation}
with $A,B, C$ defined in the previous section as
\begin{equation}
\begin{split}
A & = \biggl(\frac{1}{2}-\alpha \biggr)(1-\lambda)\\
B & = 1-2\alpha (1-\lambda) \\
C & = \frac{\lambda}{2} + \alpha(1-\lambda)
\end{split}
\end{equation}
On the boundaries we need the following conditions to be satisfied.
First for the left reservoir ($i=1$):
\begin{equation}
\begin{cases}\label{eqL}
-2(1-A) \mu_{11} + A(L^2  +  \mu_{22}) + B (\mu_{1,2} + T_L E_N(1)) =  0 & \text{ if } j=1\\
\mu_{2,j}+T_L E_N(j) + \mu_{1,j-1} + \mu_{1,j+1}-4\mu_{1j} =  0 & \text{  if }j=3,..,N-1\\
C(\mu_{11}+\mu_{22} )+ \frac{1}{2}(\mu_{13}+T_L E_N(2)) -(1+B)\mu_{12}=0 & \text{  if }j=2
\end{cases}
\end{equation}
and second on the right for $j=N$,
\begin{equation}
\begin{cases}\label{eqR}
-2(1-A) \mu_{NN} + A(\mu_{N-1,N-1} + R^2 ) + B (\mu_{N-1,N} + T_R E_N(N)) =  0 & \text{ if } i=N\\
\mu_{i+1,N}+ \mu_{i-1,N} + T_R E_N(i) + \mu_{i,N-1}- 4\mu_{iN} =  0 & \text{  if } i=2,.., N-2\\
C(\mu_{N-1,N-1}+\mu_{NN})+\frac{1}{2}(\mu_{N-2,N}+T_RE_N(N-1))-(1+B)\mu_{N-1,N} =  0 &\text{ if } i=N-1
\end{cases}
\end{equation}
and
\begin{equation}
E_ N(i)=T_L \biggl ( 1- \frac{i}{N+1}\biggr ) + T_{R}\frac{i}{N+1} \text{ , }
 i=0,...,N+1.
\end{equation}
The result follows from some tedious computations.
\end{proof}

\section{Conclusion and Discussion}
In this note we obtained conditions under which two-point functions in the non-equilibrium case will be multilinear for some class of one-dimensional interacting models of wealth (heat) transport.
In particular we present a first example of models for which the two-point function is multilinear when the stationary measure is not product.

Let us make some final remarks. \\
\textbf{Remark I:} For $\lambda=0$ and $\alpha=\frac{1}{6}$ we find the two-point correlation functions obtained for the KMP model in \cite{Gab}. \\
\textbf{Remark II:} For a non-degenerate distribution of the redistribution parameter $\epsilon$, $\alpha < \frac{1}{4}$, Hence the occupation variables are always positively correlated. If $\nu=\delta_{1/2}$ is a degenerate measure, $\alpha = \frac{1}{4}$, and the correlation functions are 0.\\
\textbf{Remark III:} We tried to apply the multilinearity ansatz in the case that at the reservoirs are additionally depending on factors $\gamma_L$ and $\gamma_R$, hence the temperature profile is not entirely linear. The boundary generators are then given by
\begin{equation}
\begin{split}
&\mathcal{L}_L( f(x_1,...,x_N)) =  \\
& \gamma_L \int_0^{\infty}\int_0^1 f(\lambda x_1 + \epsilon(1-\lambda)(x_0 + x_1),x_2,..,x_N) \nu(d\epsilon) \mu_L(dx_0) -f(x_1....,x_N)\\
&\mathcal{L}_R(f(x_1,...,x_N)) =  \\
& \gamma_R \int_0^{\infty} \int_0^1 f(x_1,...,x_{N-1},\lambda x_N + \epsilon(1-\lambda)(x_N + x_{N+1})) \nu(d\epsilon) \mu_R(dx_{N+1})\\
&-f(x_1,...,x_N)
\end{split}
\end{equation}
It turns out that the two-point functions will never be multilinear as long as $\gamma_L , \gamma_R \neq 1$ or $T_L\neq T_R$. \\
\textbf{Remark IV:} Further we tried under what conditions three-point function might be multilinear. Unfortunately this problem is very complex and we could not find a general solution to this problem even for $N=6$.

\section{Appendix}

\subsection{Generator for the two-point function }
In the following we determine the generator $\mathcal{L}$ (see \eqref{defL}) of the two-point function $f_{ij}(x_1,...,x_N):=x_ix_j$. It will also depend on the 1-point linear functions on the borders $f_{i}(x_1,...,x_N):=x_i$. Recall that the transition operator $T^{k,l}_{\lambda,\epsilon}$ was defined in \eqref{defT} , $\epsilon$ satisfies (A1) and (A2) and the reservoirs have finite first and second moments..
We will distinguish 3 cases. Case I represents $i=j$,  case II $|i-j|>1$ and finally case II $|i-j|=1$.

\subsubsection{Temperature profile}

We will determine the closed form of the linear functions corresponding to the temperature profile of the system. Let $\mu$ be a stationary measure. We calculate the density profile $E_N(i):=\int_{\Omega}x_i \mu(dx_1,...,dx_N)$ by solving
\begin{equation}
\int_{\Omega} \mathcal{L}(x_i)\mu(dx_1,...,dx_N) = 0
\end{equation}
for $i=1,...,N$. Let $i=2,...,N-1$ then we need to solve
\[
\int_{\Omega} \mathcal{L}(x_i)\mu(dx_1,...,dx_N) = \frac{1-\lambda}{2}(E_N(i+1) +E_N(i-1)-2E_N(i)) =0
\]
for $i=1$ and $i=N$ we have
\begin{equation}
\begin{split}
& \int_{\Omega} \mathcal{L}(x_1)\mu(dx_1,...,dx_N) = \frac{1-\lambda}{2}(E_N(2) +T_L-2E_N(1)) =0 \\
& \int_{\Omega} \mathcal{L}(x_N)\mu(dx_1,...,dx_N) = \frac{1-\lambda}{2}(T_R +E_N(N-1)-2E_N(i)) =0.
\end{split}
\end{equation}
We can compute explicitely the closed form expression namely
\begin{equation}\label{mui}
\begin{split}
E_N(i) =  T_L \biggl ( 1- \frac{i}{N+1}\biggr ) + T_{R}\frac{i}{N+1} &\text{ , }
 i=0,...,N+1.
\end{split}
\end{equation}
Note that the linear profile is the same as in \cite{Gab} for the SEP and KMP models.

\subsubsection{Case I: $i=j$}
We will determine the generator
\[
\mathcal{L}(f_{ii}(x)) = \mathcal{L}_L(f_{11}(x)) + \mathcal{L}_b(f_{ii}(x)) + \mathcal{L}_R(f_{NN}(x))
\]
in the case that $i=j$.
Let us first fix $i=2,...,N-1$. First we calculate the generator in the bulk $\mathcal{L}_b(f_{ii}(x))$.

\begin{equation}
\begin{split}
\mathcal{L}_b(f_{ii}(x)) & = \sum_{k,l\in \{1,...,N\}} p(k,l) \biggl[\int_0^1 f(T^{k,l}_{\lambda,\epsilon}(x_1,...,x_N)) \nu(d\epsilon) -f(x_1,...,x_N)\biggr ]\\
& = \frac{1}{2}(f_{ii}(T^{i,i+1}(x))-f_{ii}(x)) + \frac{1}{2}(f_{ii}(T^{i,i-1}(x))-f_{ii}(x)) \\
& + \frac{1}{2}(f_{ii}(T^{i+1,i}(x))-f_{ii}(x)) + \frac{1}{2}(f_{ii}(T^{i-1,i}(x))-f_{ii}(x)) \\
& = f_{ii}(T^{i,i+1}(x)) + f_{ii}(T^{i,i-1}(x)) - 2f_{ii}(x)
\end{split}
\end{equation}
The boundary generators are obtained in the following way, let $i=1$:
\[
\mathcal{L}_L(f_{11}(x)) = f_{11}(T^{0,1}(x))-f_{11}(x) + f_{12}(T^{1,2}(x)) - f_{11}(x)
\]
where
\begin{equation}
\begin{split}
f_{11}(T^{0,1}(x)) & = \int_0^{\infty}\int_0^1 (\lambda x_1 + \epsilon (1-\lambda)(x_1+x_0)) \nu(d\epsilon) \mu_L(dx_0)\\
& = \lambda x_1 + \frac{1}{2}(1-\lambda)(x_1+T_L) \\
& = \frac{1+\lambda}{2}x_1 + \frac{1}{2}(1-\lambda)T_L
\end{split}
\end{equation}
and analogously for the right boundary generator $\mathcal{L}_R$ with $T_L$ replaced by $T_R$ and $x_1$ by $x_N$.
Hence for $i=1,...,N$  the generator $\mathcal{L}$ acting on $f_{ii}(x)$ is equal to
\[
\mathcal{L}(f_{ii}(x)) = \mathcal{L}_L(f_{11}(x)) + \mathcal{L}_b(f_{ii}(x)) +\mathcal{L}_R(f_{NN}(x))
\]
with
\begin{equation}\label{GeniEqualj}
\begin{split}
&\mathcal{L}_L(f_{11}(x))  = (1-\lambda)\biggl[-2(1-A) f_{1,1}(x) + A( L^2 +  f_{2,2}(x)) + B (f_{1,2}(x) +  T_L f_1(x)) \biggr ] \\
& \mathcal{L}_b(f_{ii}(x))  = \\
&(1-\lambda)\biggl[-2(1-A) f_{ii}(x) + A (f_{i-1,i-1}(x)  + f_{i+1,i+1}(x)) + B(f_{i,i+1}(x) + f_{i-1,i}(x) )\biggr]\\
& \mathcal{L}_R(f_{NN}(x))  = \\
&(1-\lambda)\biggl [-2(1-A) f_{N,N}(x) + A( f_{N-1,N-1}(x)+ R^2 ) + B (f_{N-1,N}(x) +  T_R f_N(x))\biggr]  \\
\end{split}
\end{equation}
with coefficients given by
\begin{equation}
\begin{split}
A & =A(\alpha,\lambda) = \biggl(\frac{1}{2}-\alpha \biggr)(1-\lambda)\\
B & =  B(\alpha,\lambda) = \lambda + (1-2\alpha)(1-\lambda)
\end{split}
\end{equation}

\subsubsection{Case II:  $|i-j| > 1$}
We calculate $\mathcal{L}(f_{ij}(x))$ for $f_{ij}(x)=x_ix_j$. Let us first assume $i < j$, $i=2,...,N-2$ and $j=4,...,N-1$. We can easily see doing similar calculations as in the first case that

\begin{equation}
\mathcal{L}_b(f_{ij}(x)) =  f_{ij}(T^{i,i+1}(x)) + f_{ij}(T^{i-1,i}(x))+ f_{ij}(T^{j,j+1}(x)) + f_{ij}(T^{j-1,j}(x))- 4f_{ij}(x)
\end{equation}
At the left reservoir $i=1, 2<j$ we have
\[
\mathcal{L}_L(f_{1,j}(x)) = f_{1j}(T^{1,2}(x)) + f_{1j}(T^{0,1}(x)) + f_{1j}(T^{j,j+1}(x)) + f_{1j}(T^{j-1,j}(x)) - 4f_{1j}(x)
\]
and for the right boundary $j=N, i< N-1$
\[
\mathcal{L}_R(f_{i,N}(x)) = f_{iN}(T^{i,i+1}(x)) + f_{iN}(T^{i-1,i}(x)) + f_{iN}(T^{N,N+1}(x)) + f_{iN}(T^{N-1,N}(x)) - 4f_{iN}(x)
\]

It follows for  $i < j$, $i=2,...,N-2$ and $j=4,...,N-1 $ that the generator $\mathcal{L}$ acting on $f_{ij}(x)$ with $|i-j|>1$ can be written as
\[
\mathcal{L}(f_{ij}(x)) = \mathcal{L}_L(f_{1j}(x)) + \mathcal{L}_b(f_{ij}(x)) + \mathcal{L}_R(f_{iN}(x))
\]

\begin{equation}\label{GeniFarj}
\begin{split}
 \mathcal{L}_L(f_{1j}(x)) &= \frac{(1-\lambda)}{2}\biggl[f_{2,j}(x)+ T_L f_{j}(x) +f_{1,j+1}(x) + f_{1,j-1}(x)-4f_{1,j}(x)\biggr]  \\
\mathcal{L}_b(f_{ij}(x)) &= \frac{(1-\lambda)}{2} \biggl[f_{i+1,j}(x) + f_{i-1,j}(x) + f_{i,j+1}(x) + f_{i,j-1}(x)- 4 f_{i,j}(x)\biggr] \\
\mathcal{L}_R(f_{iN}(x)) &= \frac{(1-\lambda)}{2}\biggl[f_{i+1,N}(x)+f_{i-1,N}(x) +f_{i}(x) T_R + f_{i,N-1}(x)-4f_{i,N}(x)\biggr]
\end{split}
\end{equation}

\subsubsection{Case III: $|i-j|=1$}
Finally we calculate $\mathcal{L}(f_{i,i+1}(x))$ for the off-diagonal elements $f_{i,i+1}(x)=x_ix_{i+1}$. In this case
\[
\mathcal{L}(f_{i,i+1}(x)) = \mathcal{L}_L(f_{1,2}(x)) + \mathcal{L}_b(f_{i,i+1}(x)) + \mathcal{L}_R(f_{N-1,N}(x))
\]
Fix $i=2,...,N-2$. It is easy to verify that
\begin{equation}
 \mathcal{L}_b(f_{i,i+1}(x))  =  f_{i,i+1}(T^{i,i+1}(x)) + f_{i,i+1}(T^{i-1,i}(x)) + f_{i,i+1}(T^{i+1,i+2}(x))- 3f_{i,i+1}(x)
\end{equation}
At the left boundary for $i=1$ the generator acting on $f_{12}(x)$ is given by
\[
\mathcal{L}_L(f_{12}(x)) = f_{12}(T^{1,2}(x)) + f_{12}(T^{0,1}(x))+ f_{12}(T^{2,3}(x)) - 3f_{12}(x)
\]
and for $i=N-1$
\begin{equation*}
\begin{split}
\mathcal{L}_R(&f_{N-1,N}(x)) = \\
& f_{N-1,N}(T^{N-1,N}(x)) + f_{N-1,N}(T^{N-2,N-1}(x))+ f_{N-1,N}(T^{N,N+1}(x)) - 3f_{N-1,N}(x)
\end{split}
\end{equation*}
which yields
\begin{equation}\label{GeniMinusj}
\begin{split}
 \mathcal{L}_L(&f_{1,2}(x))  =
(1-\lambda)\biggl[ C(f_{11}(x) + f_{22}(x)) +\frac{1}{2}(f_{13}(x)+T_L f_2(x))-(1+B)f_{12}(x)\biggr] \\
 \mathcal{L}_b(&f_{i,i+1}(x))  = \\
& (1-\lambda)\biggl[C(f_{ii}(x) + f_{i+1,i+1}(x)) + \frac{1}{2}(f_{i-1,i+1}(x) + f_{i,i+2}(x) )- (1+B)f_{i,i+1}(x) \biggr]\\
 \mathcal{L}_R(&f_{N-1,N}(x))  = \\
&(1-\lambda) \biggl[ C(f_{N-1,N-1}(x) + f_{NN}(x)) +\frac{1}{2}(f_{N-2,N}(x)+T_R f_{N-1}(x))-(1+B)f_{N-1,N}(x)\biggr]\\
\end{split}
\end{equation}
and
\[
C := C(\alpha,\lambda)=\frac{\lambda}{2} + \alpha(1-\lambda)
\]


\begin{thebibliography}{ccc}
\bibitem{Gab} L. Bertini, A. De Sole, D. Gabrielli, G. Jona-Lasino, C. Landim,
``Stochastic interacting particle systems out of equilibrium", J. Stat. Mech. P07014, (2007)
\bibitem{DualTrans}
 G.\ Carinci, C.\ Giardin\`a, C. Giberti, F. Redig,
``Duality for stochastic models of transport", J. Stat. Phys. \textbf{152},
657--697, (2013)
\bibitem{Dual}
P. Cirillo, F. Redig, W. M. Ruszel,
``Duality and stationary distributions of wealth distribution models'', J. Math. Phys. A \textbf{47}, 8, 085203, (2014)
\bibitem{Der}
B.\ Derrida, ``Systems out of equilibrium: some exact soluable models", Proc. STATPHYS 19, (1995)
\bibitem{KMP} C. Kipnis, C. Marchioro, E. Presutti, ``Heat flow in an exactly solvable model'',  J. Stat. Phys. 27(1), 65--74 (1982)
\bibitem{kia}
F. Redig, K. Vafayi, ``Weak coupling limits in a stochastic model of heat conduction", J. Math. Phys. \textbf{52}, 9,  (2011) 
\bibitem{spohn}
H. Spohn, ``Long range correlations for stochastic lattice gases in a nonequilibrium steady state", J. Phys. A  \textbf{16}, 18, 4275--4291, (1983)
\end{thebibliography}
 \end{document}